\newtheorem{theorem}{Theorem}
\newenvironment{proof}[1][Proof]{\noindent\textbf{#1.} }{\ \rule{0.5em}{0.5em}}
\begin{document}
\title{Two kinds of quantum adiabatic approximation}
\author{Ming-Yong Ye}
\email{myye@mail.ustc.edu.cn}
\author{Xiang-Fa Zhou}
\author{Yong-Sheng Zhang}
\author{Guang-Can Guo}
\affiliation{Key Laboratory of Quantum Information, Department of Physics, University of
Science and Technology of China, Hefei 230026, People's Republic of China}

\begin{abstract}
A simple proof of quantum adiabatic theorem is provided. Quantum adiabatic
approximation is divided into two kinds. For Hamiltonian $H\left(  t/T\right)
$, a relation between the size of the error caused by quantum adiabatic
approximation and the parameter $T$ is given.

\end{abstract}

\pacs{03.65.Ca, 03.65.Ta, 03.65.Vf}
\maketitle

\section{Introduction}

A quantum system is described by its Hamiltonian. When the Hamiltonian is
time-independent and its spectral decomposition is known, we can easily get
the dynamical evolution operator (DEO). However it is usually impossible to
get an analytic expression of the DEO when the Hamiltonian varies with time.
In a pioneering paper, Born and Fock considered a kind of time-dependent
Hamiltonian $H_{t}$ that has a special form $H_{t}=H\left(  s\right)  $ where
$s=t/T$ and the spectrum of $H\left(  s\right)  $ consists of purely discrete
eigenvalues \cite{born}. Their result and the extended results
\cite{kato,nenciu,messiah,avron} on more general $H\left(  s\right)  $ are all
called quantum adiabatic theorem (QAT). For a history of QAT we refer to Ref.
\cite{avron}.

Recently a debate arised on the validity of application of quantum adiabatic
theorem (QAT) or quantum adiabatic approximation (QAA)
\cite{ms,duki,ma,ms2,tong,sarandy,wu}. One thing can be sure is that the
widely used simple condition for QAA is actually insufficient \cite{ms,tong}.
Since QAA and the related Berry phase \cite{berry} have a wide application in
many fields \cite{bookgeometric}, it is valuable to find new conditions for
the approximation. The recent discussion on QAT or QAA is also stimulated by
quantum adiabatic computation \cite{jisuan}; quantum computers are believed to
be more powerful than computers that we use today \cite{nielsen}.

To eliminate ambiguity, the thing should be specified first is the rigorous
definitions of QAT and QAA. We only concern time-dependent matrix Hamiltonians
without energy degeneracy, which are also the topics of the recent debate. QAT
can only be discussed for Hamiltonians that have special form $H_{t}=H\left(
s\right)  $ where $s=t/T$. This can be seen from the original papers on QAT
\cite{born,messiah}. We think QAT states that the difference between the DEO
and the adiabatic evolution operator (AEO) (defined below) of the system will
approach zero in the limit $T\rightarrow\infty$. However, we can talk about
QAA for Hamiltonians that may not be written in the form $H_{t}=H\left(
s\right)  $. We define QAA as an approximation that uses the AEO in place of
the DEO in our calculations. Under this definition of QAA, we find that there
are two kinds of QAA and conditions for them to be acceptable are different.
Of course we can also discuss QAA for Hamiltonians that are written in the
form $H_{t}=H\left(  s\right)  $, one can discuss the relation between the
amplitude of the error caused by QAA and the value of the parameter $T$
\cite{regev}.

The structure of the paper is as follows. In Sec. \ref{sec:2}, we give an
introduction to QAT and present a proof of the theorem. In Sec. \ref{sec:3},
we give a discussion on QAA. In Sec. \ref{sec:4}, we give an example to
demonstrate the results obtained in Sec. \ref{sec:3}. In Sec. \ref{sec:5}, we
give a relation between the size of the error caused by QAA and the parameter
$T$ for Hamiltonian $H\left(  t/T\right)  $. Finally the conclusion is given
in Sec. \ref{sec:6}.

\section{Quantum adiabatic theorem\label{sec:2}}

We just consider non-degenerate two-level systems to demonstrate our basic
idea through out the paper. A smooth time-dependent Hamiltonian $H_{t}$ has
two instantaneous eigenvalues and eigenstates,%
\begin{equation}
H_{t}\left\vert m_{t}\right\rangle =m_{t}\left\vert m_{t}\right\rangle ,\text{
}m=1,2. \label{z1}%
\end{equation}
Eq. (\ref{z1}) cannot determine the phases of the eigenstates $\left\vert
m_{t}\right\rangle $. We choose the phases such that \cite{schiff}
\begin{equation}
\left\langle m_{t}\right\vert \frac{d}{dt}\left\vert m_{t}\right\rangle
=0,\text{ }m=1,2. \label{r0}%
\end{equation}
The DEO is denoted by $U_{d}\left(  t\right)  $ which satisfies the
Schr\"{o}dinger equation
\begin{equation}
i\hbar\frac{dU_{d}\left(  t\right)  }{dt}=H_{t}U_{d}\left(  t\right)  ,\text{
}U_{d}\left(  0\right)  =I. \label{z2}%
\end{equation}
Except for the trivial cases where the projectors $\left\vert m_{t}%
\right\rangle \left\langle m_{t}\right\vert $ are independent of time $t$, it
is usually impossible to obtain an analytic expression for $U_{d}\left(
t\right)  $. The AEO is defined as%
\begin{equation}
U_{a}\left(  t\right)  =%
{\displaystyle\sum\limits_{n=1}^{2}}
e^{-\frac{i}{\hbar}\int_{0}^{t}n_{t^{\prime}}dt^{\prime}}\left\vert
n_{t}\right\rangle \left\langle n_{0}\right\vert , \label{aeo}%
\end{equation}
which is a little different from the so-called adiabatic transformation
\cite{nenciu}.

\begin{theorem}
(QAT) When $H_{t}$ has the special form $H_{t}=H\left(  s\right)  $ where
$s=t/T\in\left[  0,1\right]  $, i.e., it varies from $H\left(  0\right)  $ to
$H\left(  1\right)  $ using time $T$, for any initial state $\left\vert
\Psi\left(  0\right)  \right\rangle $ there is
\begin{equation}
\lim_{T\rightarrow\infty}\left\Vert U_{d}\left(  t\right)  \left\vert
\Psi\left(  0\right)  \right\rangle -U_{a}\left(  t\right)  \left\vert
\Psi\left(  0\right)  \right\rangle \right\Vert =0. \label{main}%
\end{equation}

\end{theorem}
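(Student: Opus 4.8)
The plan is to work in the "adiabatic frame" defined by the AEO and show that the residual dynamics become trivial as $T\to\infty$. Concretely, I would define $W(t) = U_a^\dagger(t)\,U_d(t)$, so that the claim (\ref{main}) is equivalent to showing $\|W(t)|\Psi(0)\rangle - |\Psi(0)\rangle\| \to 0$, uniformly for $t$ corresponding to $s\in[0,1]$. The first step is to derive the differential equation governing $W(t)$: differentiating the product and using the Schr\"odinger equation (\ref{z2}) for $U_d$ together with the explicit form (\ref{aeo}) of $U_a$, one gets $i\hbar\,\dot W(t) = K(t)\,W(t)$ with $W(0)=I$, where the effective generator $K(t)$ is built from the overlaps $\langle m_t|\frac{d}{dt}|n_t\rangle$. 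Because of the phase choice (\ref{r0}), the diagonal entries of the relevant coupling vanish, and $K(t)$ reduces to an off-diagonal coupling between the two levels, each term carrying the oscillatory dynamical phase factor $e^{\frac{i}{\hbar}\int_0^t (n_{t'}-m_{t'})dt'}$.

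The second step is the rescaling to the slow variable $s=t/T$. Writing everything in terms of $s$, the coupling coefficients $\langle m_s|\frac{d}{ds}|n_s\rangle$ are $T$-independent (they depend only on the fixed path $H(s)$), while the phase factor becomes $e^{\frac{iT}{\hbar}\int_0^s (n(\sigma)-m(\sigma))d\sigma}$, which oscillates ever more rapidly as $T$ grows. The equation for $W$ in the $s$ variable then has the schematic form $\frac{d}{ds}W(s) = \frac{1}{T}\cdot(\text{bounded, slowly varying})\cdot(\text{fast oscillation})\cdot W(s)$ — except that the naive $1/T$ is exactly cancelled by the $T$ inside the oscillation when one integrates by parts, which is the actual mechanism. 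So the third step is the integration-by-parts / Riemann–Lebesgue estimate: write $W(s) = I + \int_0^s (\text{integrand})\,d\sigma$, integrate by parts using the non-degeneracy hypothesis $n(\sigma)-m(\sigma)\neq 0$ (so we may divide by the gap and differentiate the phase), and conclude that the integral is $O(1/T)$. Since $\|U_a(t)\|=1$, the bound on $\|W(t)-I\|$ transfers directly to (\ref{main}).

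The main obstacle I expect is making the integration-by-parts step clean and rigorous: one needs the energy gap $|n(s)-m(s)|$ to be bounded away from zero on $[0,1]$ (which follows from non-degeneracy plus compactness and smoothness of $H(s)$), and one needs the coupling terms $\langle m_s|\frac{d}{ds}|n_s\rangle$ and their first derivatives in $s$ to be bounded — again guaranteed by smoothness on the compact interval. The boundary terms from integration by parts are $O(1/T)$ and the remaining integral picks up one more derivative but stays $O(1/T)$, so the estimate closes. The two-level restriction makes the bookkeeping of $K(t)$ entirely explicit (a single off-diagonal coupling function), which is presumably why the authors adopt it; the same argument would extend to the non-degenerate finite-dimensional case by summing over pairs. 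I would also double-check the uniformity in $t$: the bound should hold for all $t\in[0,T]$ simultaneously, which it does because the integration-by-parts estimate is uniform in the upper limit $s\in[0,1]$.
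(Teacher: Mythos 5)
Your proposal is essentially the paper's own proof: the paper expands $U_d(t)|\Psi(0)\rangle$ in the instantaneous eigenbasis with coefficients $c_n(t)$ (which are exactly the matrix elements of your $W=U_a^\dagger U_d$), derives the coupled equations (\ref{ec1})--(\ref{ec2}) with the same off-diagonal oscillatory coupling, and performs the same integration by parts against the dynamical phase to obtain the $O(1/T)$ bound via the boundary, derivative-of-coupling, and derivative-of-coefficient terms (\ref{w1})--(\ref{w3}). The approach, the role of the non-degenerate gap, and the smoothness/compactness inputs all match; no substantive difference.
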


\textbf{Remark}. What we consider is the simplest kind of QAT and it has been
proved long ago by considering operator evolutions
\cite{born,kato,messiah,nenciu}, and no one has doubt on it. From the theorem
it is not hard to see that when the system starts from the ground state of
$H\left(  0\right)  $ it will evolve closely to the ground state of $H\left(
1\right)  $ if $T$ is big enough; this is the basis of quantum adiabatic
computation \cite{jisuan}. Now we give a proof by considering the state vector
evolution, which we think is more intuitive, and the process of the proof will
be used when we discuss QAA.

\begin{proof}
We write the system state at time $t$ as
\begin{equation}
U_{d}\left(  t\right)  \left\vert \Psi\left(  0\right)  \right\rangle =%
{\displaystyle\sum\limits_{n=1}^{2}}
c_{n}\left(  t\right)  e^{-\frac{i}{\hbar}\int_{0}^{t}n_{t^{\prime}}%
dt^{\prime}}\left\vert n_{t}\right\rangle . \label{nn}%
\end{equation}
Since $U_{d}\left(  t\right)  \left\vert \Psi\left(  0\right)  \right\rangle $
satisfies the Schr\"{o}dinger equation, we can obtain \cite{schiff}%
\begin{equation}
\frac{dc_{1}\left(  t\right)  }{dt}=-c_{2}\left(  t\right)  \left\langle
1_{t}\right\vert \frac{d}{dt}\left\vert 2_{t}\right\rangle e^{\frac{i}{\hbar
}\int_{0}^{t}\left(  1_{t^{\prime}}-2_{t^{\prime}}\right)  dt^{\prime}},
\label{ec1}%
\end{equation}%
\begin{equation}
\frac{dc_{2}\left(  t\right)  }{dt}=-c_{1}\left(  t\right)  \left\langle
2_{t}\right\vert \frac{d}{dt}\left\vert 1_{t}\right\rangle e^{\frac{i}{\hbar
}\int_{0}^{t}\left(  2_{t^{\prime}}-1_{t^{\prime}}\right)  dt^{\prime}}.
\label{ec2}%
\end{equation}
Integrate both sides of Eq. (\ref{ec1}) we get%
\begin{align}
&  c_{1}\left(  t^{\prime\prime}\right)  -c_{1}\left(  0\right) \nonumber\\
&  =-\int_{0}^{t^{\prime\prime}}c_{2}\left(  t\right)  \left\langle
1_{t}\right\vert \frac{d}{dt}\left\vert 2_{t}\right\rangle e^{\frac{i}{\hbar
}\int_{0}^{t}\left(  1_{t^{\prime}}-2_{t^{\prime}}\right)  dt^{\prime}%
}dt\nonumber\\
&  =-\int_{0}^{t^{\prime\prime}}c_{2}\left(  t\right)  \frac{\left\langle
1_{t}\right\vert \frac{d}{dt}\left\vert 2_{t}\right\rangle }{i\left(
1_{t}-2_{t}\right)  /\hbar}d\left[  e^{\frac{i}{\hbar}\int_{0}^{t}\left(
1_{t^{\prime}}-2_{t^{\prime}}\right)  dt^{\prime}}\right]  .
\end{align}
In the above we have used the fact that there is no energy degeneracy, i.e.,
$1_{t}-2_{t}\neq0$. Integrate by part we get
\begin{equation}
c_{1}\left(  t^{\prime\prime}\right)  -c_{1}\left(  0\right)  =A\left(
t^{\prime\prime}\right)  +B\left(  t^{\prime\prime}\right)  +C\left(
t^{\prime\prime}\right)  , \label{ABC}%
\end{equation}
where%
\begin{equation}
A\left(  t^{\prime\prime}\right)  =-c_{2}\left(  t\right)  f_{t}e^{\frac
{i}{\hbar}\int_{0}^{t}\left(  1_{t^{\prime}}-2_{t^{\prime}}\right)
dt^{\prime}}\left\vert _{t=0}^{t=t^{\prime\prime}}\right.  , \label{w1}%
\end{equation}%
\begin{equation}
B\left(  t^{\prime\prime}\right)  =\int_{0}^{t^{\prime\prime}}c_{2}\left(
t\right)  \frac{df_{t}}{dt}e^{\frac{i}{\hbar}\int_{0}^{t}\left(  1_{t^{\prime
}}-2_{t^{\prime}}\right)  dt^{\prime}}dt, \label{w2}%
\end{equation}%
\begin{equation}
C\left(  t^{\prime\prime}\right)  =\int_{0}^{t^{\prime\prime}}\left[
\frac{dc_{2}\left(  t\right)  }{dt}f_{t}e^{\frac{i}{\hbar}\int_{0}^{t}\left(
1_{t^{\prime}}-2_{t^{\prime}}\right)  dt^{\prime}}\right]  dt, \label{w3}%
\end{equation}%
\begin{equation}
f_{t}=\frac{\left\langle 1_{t}\right\vert \frac{d}{dt}\left\vert
2_{t}\right\rangle }{i\left(  1_{t}-2_{t}\right)  /\hbar}.
\end{equation}
Because $H\left(  s\right)  $ depends only on the parameter $s$, instantaneous
eigenstates and eigenvalues are also dependent only on $s$, i.e., we can write
$\left\vert m_{t}\right\rangle =\left\vert m\left(  s\right)  \right\rangle $,
$m_{t}=m\left(  s\right)  $. Now we have%
\begin{equation}
f_{t}=\frac{1}{T}\frac{\hbar\left\langle 1\left(  s\right)  \right\vert
\frac{d}{ds}\left\vert 2\left(  s\right)  \right\rangle }{i\left(  1\left(
s\right)  -2\left(  s\right)  \right)  }=\frac{f\left(  s\right)  }{T}.
\label{f}%
\end{equation}
From Eqs. (\ref{w1}-\ref{w3}) we can get%
\begin{equation}
\left\vert A\left(  t^{\prime\prime}\right)  \right\vert \leq\frac{2}{T}%
\max_{s\in\left[  0,1\right]  }\left\vert f\left(  s\right)  \right\vert ,
\label{a}%
\end{equation}%
\begin{equation}
\left\vert B\left(  t^{\prime\prime}\right)  \right\vert \leq\frac{1}{T}%
\max_{s\in\left[  0,1\right]  }\left\vert \frac{d}{ds}f\left(  s\right)
\right\vert , \label{bb}%
\end{equation}%
\begin{equation}
\left\vert C\left(  t^{\prime\prime}\right)  \right\vert \leq\frac{1}{T}%
\max_{s\in\left[  0,1\right]  }\left\vert \left\langle 2\left(  s\right)
\right\vert \frac{d}{ds}\left\vert 1\left(  s\right)  \right\rangle f\left(
s\right)  \right\vert . \label{c}%
\end{equation}
In deriving Eq. (\ref{c}) from Eq. (\ref{w3}) we have used Eq. (\ref{ec2}).
From Eq. (\ref{ABC}), we know $\left\vert A\left(  t^{\prime\prime}\right)
\right\vert +\left\vert B\left(  t^{\prime\prime}\right)  \right\vert
+\left\vert C\left(  t^{\prime\prime}\right)  \right\vert $ is an upper bound
of $\left\vert c_{1}\left(  t^{\prime\prime}\right)  -c_{1}\left(  0\right)
\right\vert $, and Eqs. (\ref{a}-\ref{c}) indicate this upper bound will
approach zero in the limit $T\rightarrow\infty$, so we have%
\begin{equation}
\lim_{T\rightarrow\infty}\left\vert c_{1}\left(  t\right)  -c_{1}\left(
0\right)  \right\vert =0. \label{mm1}%
\end{equation}
Similarly we can obtain%
\begin{equation}
\lim_{T\rightarrow\infty}\left\vert c_{2}\left(  t\right)  -c_{2}\left(
0\right)  \right\vert =0. \label{mm2}%
\end{equation}
Because
\begin{align}
&  \left\Vert U_{d}\left(  t\right)  \left\vert \Psi\left(  0\right)
\right\rangle -U_{a}\left(  t\right)  \left\vert \Psi\left(  0\right)
\right\rangle \right\Vert \nonumber\\
&  =\sqrt{%
{\displaystyle\sum\limits_{n=1}^{2}}
\left\vert c_{n}\left(  t\right)  -c_{n}\left(  0\right)  \right\vert ^{2}},
\label{mm}%
\end{align}
we get%
\begin{equation}
\lim_{T\rightarrow\infty}\left\Vert U_{d}\left(  t\right)  \left\vert
\Psi\left(  0\right)  \right\rangle -U_{a}\left(  t\right)  \left\vert
\Psi\left(  0\right)  \right\rangle \right\Vert =0,
\end{equation}
which completes the proof.
\end{proof}

\section{Quantum adiabatic approximation\label{sec:3}}

The system state $\left\vert \Psi\left(  t\right)  \right\rangle $ at time $t$
and the initial state $\left\vert \Psi\left(  0\right)  \right\rangle $ are
connected by the relation $\left\vert \Psi\left(  t\right)  \right\rangle
=U_{d}\left(  t\right)  \left\vert \Psi\left(  0\right)  \right\rangle $.
Suppose the operator $B\left(  t\right)  $ represents a physical quantity at
time $t$. Quantum mechanics tells us that when we measure the quantity the
result will be random and the average value will be
\begin{equation}
\left\langle \Psi\left(  t\right)  \right\vert B\left(  t\right)  \left\vert
\Psi\left(  t\right)  \right\rangle =\left\langle \Psi\left(  0\right)
\right\vert U_{d}^{\dagger}\left(  t\right)  B\left(  t\right)  U_{d}\left(
t\right)  \left\vert \Psi\left(  0\right)  \right\rangle . \label{add01}%
\end{equation}
Assume we are given the initial state $\left\vert \Psi\left(  0\right)
\right\rangle $ and the Hamiltonian $H_{t}$ of the system. If we cannot figure
out $U_{d}\left(  t\right)  $ from $H_{t}$, usually we will not know the
average value $\left\langle \Psi\left(  t\right)  \right\vert B\left(
t\right)  \left\vert \Psi\left(  t\right)  \right\rangle $. From QAT we know
that in some cases the difference between $U_{d}\left(  t\right)  $ and
$U_{a}\left(  t\right)  $ will be very small, we may consider whether it is
acceptable to use $U_{a}\left(  t\right)  $ in place of $U_{d}\left(
t\right)  $ in calculating the average in Eq. (\ref{add01}). We define QAA as
an approximation that uses the AEO $U_{a}\left(  t\right)  $ in place of the
DEO $U_{d}\left(  t\right)  $ in our calculations. The definition of QAA
applies to a general time-dependent $H_{t}$ contrast to QAT. If we require the
approximation to be acceptable for any physical quantity $B\left(  t\right)  $
and any initial state $\left\vert \Psi\left(  0\right)  \right\rangle $ in Eq.
(\ref{add01}), it means the difference between $U_{d}\left(  t\right)
\left\vert \Psi\left(  0\right)  \right\rangle $ and $U_{a}\left(  t\right)
\left\vert \Psi\left(  0\right)  \right\rangle $ should be small.

The difference between $U_{d}\left(  t\right)  \left\vert \Psi\left(
0\right)  \right\rangle $ and $U_{a}\left(  t\right)  \left\vert \Psi\left(
0\right)  \right\rangle $ is small means
\begin{equation}
\left\Vert U_{d}\left(  t\right)  \left\vert \Psi\left(  0\right)
\right\rangle -U_{a}\left(  t\right)  \left\vert \Psi\left(  0\right)
\right\rangle \right\Vert \ll1, \label{a1}%
\end{equation}
which is equivalent to
\begin{equation}
\left\langle \Psi\left(  0\right)  \right\vert U_{a}^{\dag}\left(  t\right)
U_{d}\left(  t\right)  \left\vert \Psi\left(  0\right)  \right\rangle
\approx1. \label{a2}%
\end{equation}
Due to the linearity of quantum mechanics, for arbitrary initial state
$\left\vert \Psi\left(  0\right)  \right\rangle $, condition (\ref{a2}) will
be satisfied when
\begin{equation}
\left\langle m_{0}\right\vert U_{a}^{\dag}\left(  t\right)  U_{d}\left(
t\right)  \left\vert m_{0}\right\rangle \approx1,\text{ }m=1,2, \label{a3}%
\end{equation}
which we regard as the condition for the first kind of QAA. This kind of QAA
pays attention to the phase of $U_{d}\left(  t\right)  \left\vert
m_{0}\right\rangle $. Condition (\ref{a3}) ensures that the relative phase
between the two instantaneous eigenstates in $U_{d}\left(  t\right)
\left\vert \Psi\left(  0\right)  \right\rangle $ is almost the same as that in
$U_{a}\left(  t\right)  \left\vert \Psi\left(  0\right)  \right\rangle $.

The probability of finding the system in an instantaneous energy eigenstate
$\left\vert n_{t}\right\rangle $ is of considerable interest, e.g., in
coherent population transfer among quantum states of atoms and molecules
\cite{rmp}. In this case, in Eq. (\ref{add01}) the operator $B\left(
t\right)  =\left\vert n_{t}\right\rangle \left\langle n_{t}\right\vert $ and
the probability is $\left\vert \left\langle n_{t}\right\vert U_{d}\left(
t\right)  \left\vert \Psi\left(  0\right)  \right\rangle \right\vert ^{2}$.
From the definition of the $U_{a}\left(  t\right)  $ in Eq. (\ref{aeo}) we
know $\left\vert \left\langle n_{t}\right\vert U_{a}\left(  t\right)
\left\vert \Psi\left(  0\right)  \right\rangle \right\vert ^{2}$ is a
constant, it means the system will follow the instantaneous eigenstate if it
starts from an instantaneous eigenstates. Therefore it is acceptable to use
$U_{a}\left(  t\right)  $ in place of $U_{d}\left(  t\right)  $ in calculating
the probability $\left\vert \left\langle n_{t}\right\vert U_{d}\left(
t\right)  \left\vert m_{0}\right\rangle \right\vert ^{2}$ when%
\begin{equation}
\left\vert \left\langle m_{t}\right\vert U_{d}\left(  t\right)  \left\vert
m_{0}\right\rangle \right\vert \approx1,\text{ }m=1,2. \label{b}%
\end{equation}
Eq. (\ref{b}) is the same as
\begin{equation}
\left\vert \left\langle m_{0}\right\vert U_{a}^{\dag}\left(  t\right)
U_{d}\left(  t\right)  \left\vert m_{0}\right\rangle \right\vert
\approx1,\text{ }m=1,2, \label{b1}%
\end{equation}
which we regard as the condition for the second kind of QAA. This kind of QAA
is considered in \cite{schiff,messiah}. When condition (\ref{b1})\ is
satisfied, the relative phase between the two instantaneous eigenstates in
$U_{d}\left(  t\right)  \left\vert \Psi\left(  0\right)  \right\rangle $ may
differ much from the relative phase in $U_{a}\left(  t\right)  \left\vert
\Psi\left(  0\right)  \right\rangle $.

The second kind of QAA is just a special case of the first kind and the
conditions for them are different: condition (\ref{a3}) can lead to (\ref{b1})
while (\ref{b1}) may not lead to (\ref{a3}), so there are cases where the
second kind of QAA is acceptable while the first kind of QAA is inacceptable.
Though the conditions for two kinds of QAA are given, usually it is not easy
to directly check whether they are satisfied. In the following we will give a
discussion on the conditions for them.

From (\ref{mm}) we know that if%
\begin{equation}
\left\vert c_{n}\left(  t\right)  -c_{n}\left(  0\right)  \right\vert
\ll1,\text{ }n=1,2,
\end{equation}
the first kind of QAA will be acceptable. From Eq. (\ref{ABC}) we can get%
\begin{equation}
\left\vert c_{1}\left(  t^{\prime\prime}\right)  -c_{1}\left(  0\right)
\right\vert \leq\left\vert \bar{A}\left(  t^{\prime\prime}\right)  \right\vert
+\left\vert \bar{B}\left(  t^{\prime\prime}\right)  \right\vert +\left\vert
\bar{C}\left(  t^{\prime\prime}\right)  \right\vert , \label{upp}%
\end{equation}
where%
\begin{equation}
\left\vert \bar{A}\left(  t^{\prime\prime}\right)  \right\vert =\left\vert
\frac{\hbar\left\langle 1_{t}\right\vert \frac{d}{dt}\left\vert 2_{t}%
\right\rangle }{1_{t}-2_{t}}\right\vert _{t=0}+\left\vert \frac{\hbar
\left\langle 1_{t}\right\vert \frac{d}{dt}\left\vert 2_{t}\right\rangle
}{1_{t}-2_{t}}\right\vert _{t=t^{\prime\prime}},
\end{equation}%
\begin{equation}
\left\vert \bar{B}\left(  t^{\prime\prime}\right)  \right\vert =t^{\prime
\prime}\max_{t\in\left[  0,t^{\prime\prime}\right]  }\left\vert \frac{d}%
{dt}\left(  \frac{\hbar\left\langle 1_{t}\right\vert \frac{d}{dt}\left\vert
2_{t}\right\rangle }{1_{t}-2_{t}}\right)  \right\vert , \label{fg}%
\end{equation}%
\begin{equation}
\left\vert \bar{C}\left(  t^{\prime\prime}\right)  \right\vert =t^{\prime
\prime}\max_{t\in\left[  0,t^{\prime\prime}\right]  }\left\vert \left\langle
2_{t}\right\vert \frac{d}{dt}\left\vert 1_{t}\right\rangle \frac
{\hbar\left\langle 1_{t}\right\vert \frac{d}{dt}\left\vert 2_{t}\right\rangle
}{1_{t}-2_{t}}\right\vert . \label{gf}%
\end{equation}
Eq. (\ref{upp}) gives an upper bound for $\left\vert c_{1}\left(
t^{\prime\prime}\right)  -c_{1}\left(  0\right)  \right\vert $. The first kind
of QAA will be acceptable if $\left\vert \bar{A}\left(  t^{\prime\prime
}\right)  \right\vert $, $\left\vert \bar{B}\left(  t^{\prime\prime}\right)
\right\vert $ and $\left\vert \bar{C}\left(  t^{\prime\prime}\right)
\right\vert $ are all very small. Eqs. (\ref{fg}) and (\ref{gf}) indicate that
it may be inappropriate to use $U_{a}\left(  t\right)  \left\vert \Psi\left(
0\right)  \right\rangle $ in place of $U_{d}\left(  t\right)  \left\vert
\Psi\left(  0\right)  \right\rangle $ when $t$ is large. When instantaneous
eigenstates and eigenvalues are given, it may be easy to check whether
$\left\vert \bar{A}\left(  t^{\prime\prime}\right)  \right\vert $, $\left\vert
\bar{B}\left(  t^{\prime\prime}\right)  \right\vert $ and $\left\vert \bar
{C}\left(  t^{\prime\prime}\right)  \right\vert $ are small at the given time
$t^{\prime\prime}$. For the example we will discuss in the next section, this
method is quite good.

The second kind of QAA is discussed in Refs. \cite{schiff,messiah}. It is
known that when%
\begin{equation}
\left\vert \int_{0}^{t}\left\langle m_{t_{1}}\right\vert \frac{d}{dt_{1}%
}\left\vert n_{t_{1}}\right\rangle e^{\frac{i}{\hbar}\int_{0}^{t_{1}}\left(
m_{t^{\prime}}-n_{t^{\prime}}\right)  dt^{\prime}}dt_{1}\right\vert \ll1,\quad
n\neq m, \label{xx}%
\end{equation}
the second kind of QAA will be acceptable \cite{messiah,schiff}. We emphasize
that the instantaneous eigenstates in (\ref{xx}) satisfy the phase condition
(\ref{r0}). We can derive Eq. (\ref{xx}) as follows. It is not hard to know
that the unitary operator $\tilde{U}\left(  t\right)  =U_{a}^{\dag}\left(
t\right)  U_{d}\left(  t\right)  $ is generated by the Hamiltonian
\begin{equation}
\tilde{H}_{t}=-i\hbar%
{\displaystyle\sum\limits_{m\neq n}^{M}}
\left\langle m_{t}\right\vert \frac{d}{dt}\left\vert n_{t}\right\rangle
e^{\frac{i}{\hbar}\int_{0}^{t}\left(  m_{t^{\prime}}-n_{t^{\prime}}\right)
dt^{\prime}}\left\vert m_{0}\right\rangle \left\langle n_{0}\right\vert .
\end{equation}
We can get an expansion of $\tilde{U}\left(  t\right)  $,
\begin{equation}
\tilde{U}\left(  t\right)  =I+\frac{1}{i\hslash}\int_{0}^{t}dt_{1}\tilde
{H}_{t_{1}}+\frac{1}{\left(  i\hslash\right)  ^{2}}\int_{0}^{t}dt_{1}\int
_{0}^{t_{1}}dt_{2}\tilde{H}_{t_{1}}\tilde{H}_{t_{2}}+\ldots. \label{add02}%
\end{equation}
Substitute Eq. (\ref{aeo}) and Eq. (\ref{add02}) into the expression
$U_{d}\left(  t\right)  =$ $U_{a}\left(  t\right)  \tilde{U}\left(  t\right)
$, we can obtain an expansion of $U_{d}\left(  t\right)  $. Similar
discussions appear in Refs. \cite{mosta,mac}. The condition (\ref{b1}) of the
second kind of QAA is equivalent to
\begin{equation}
\left\vert \left\langle m_{0}\right\vert U_{a}^{\dag}\left(  t\right)
U_{d}\left(  t\right)  \left\vert n_{0}\right\rangle \right\vert =\left\vert
\left\langle m_{0}\right\vert \tilde{U}\left(  t\right)  \left\vert
n_{0}\right\rangle \right\vert \ll1,\text{ }m\neq n. \label{add03}%
\end{equation}
When we just substitute the first two terms of Eq. (\ref{add02}) into
(\ref{add03}), we will obtain the condition (\ref{xx}). Though (\ref{xx}) as a
condition for the second kind of QAA has not been strictly proved since we
just use the first two terms in Eq. (\ref{add02}), we believe it is
sufficient. If $\left\langle m_{t_{1}}\right\vert \frac{d}{dt_{1}}\left\vert
n_{t_{1}}\right\rangle $ and $\left(  m_{t^{\prime}}-n_{t^{\prime}}\right)  $
are constants, condition (\ref{xx}) can be simplified into%
\begin{equation}
\left\vert \frac{\hbar\left\langle m_{t_{1}}\right\vert \frac{d}{dt_{1}%
}\left\vert n_{t_{1}}\right\rangle }{\left(  m_{t_{1}}-n_{t_{1}}\right)
}\right\vert \ll1,\text{ \ }m\neq n, \label{xx1}%
\end{equation}
while when $\left\langle m_{t_{1}}\right\vert \frac{d}{dt_{1}}\left\vert
n_{t_{1}}\right\rangle $ and $\left(  m_{t^{\prime}}-n_{t^{\prime}}\right)  $
are not constants, a modification of (\ref{xx1}) such as
\begin{equation}
\max_{t_{1}\in\left[  0,t\right]  }\left\vert \frac{\hbar\left\langle
m_{t_{1}}\right\vert \frac{d}{dt_{1}}\left\vert n_{t_{1}}\right\rangle
}{m_{t_{1}}-n_{t_{1}}}\right\vert \ll1,\text{ \ }m\neq n, \label{add05}%
\end{equation}
cannot replace (\ref{xx}) as a sufficient condition for the second of QAA
\cite{schiff,ms,tong,vertesi}, however it may be acceptable to regard
(\ref{add05}) as a sufficient condition when $\left\langle m_{t_{1}%
}\right\vert \frac{d}{dt_{1}}\left\vert n_{t_{1}}\right\rangle $ and $\left(
m_{t^{\prime}}-n_{t^{\prime}}\right)  $ vary slowly. The counterexample in
Ref. \cite{ms} demonstrates that a contradiction will appear at a special
evolution time if we take (\ref{add05}) as a sufficient condition. We think
the rapid changing of $\left\langle m_{t_{1}}\right\vert \frac{d}{dt_{1}%
}\left\vert n_{t_{1}}\right\rangle $ causes the contradiction because when
$\left\langle m_{t_{1}}\right\vert \frac{d}{dt_{1}}\left\vert n_{t_{1}%
}\right\rangle $ varies fast, especially when its varying frequency resonant
to the energy gap, condition (\ref{add05}) deviates much from (\ref{xx}).
Finally we want to emphasize that even when condition (\ref{xx}) is satisfied,
it doesnot means it is acceptable to use $U_{a}\left(  t\right)  \left\vert
n_{0}\right\rangle $, condition \ to approximate $U_{d}\left(  t\right)
\left\vert n_{0}\right\rangle $.

\section{An example\label{sec:4}}

In this section we will give an example to demonstrate that there are cases
where the second kind of QAA is acceptable while the first kind is inacceptable.

Consider the Hamiltonian%
\begin{equation}
H_{t}=-\hbar\omega_{0}\left[  \sigma_{x}\cos2\omega t+\sigma_{y}\sin2\omega
t\right]  , \label{H}%
\end{equation}
where $\omega_{0}$ and $\omega$ are both positive. Its eigenvalues are
$1_{t}=\hbar\omega_{0}$ and $2_{t}=-\hbar\omega_{0}$. We choose the
eigenstates
\begin{align}
\left\vert 1_{t}\right\rangle  &  =\left(  e^{-i\omega t}\left\vert
0\right\rangle -e^{i\omega t}\left\vert 1\right\rangle \right)  /\sqrt{2},\\
\left\vert 2_{t}\right\rangle  &  =\left(  e^{-i\omega t}\left\vert
0\right\rangle +e^{i\omega t}\left\vert 1\right\rangle \right)  /\sqrt
{2},\nonumber
\end{align}
which satisfy the phase condition (\ref{r0}). It can be proved that in the
basis $\left\{  \left\vert 1_{0}\right\rangle ,\left\vert 2_{0}\right\rangle
\right\}  $
\begin{equation}
U_{a}^{\dag}\left(  t\right)  U_{d}\left(  t\right)  =\left[
\begin{array}
[c]{cc}%
e^{i\omega_{0}t}\left(  \cos\bar{\omega}t-\frac{i\omega_{0}}{\bar{\omega}}%
\sin\bar{\omega}t\right)  & e^{i\omega_{0}t}\times\frac{i\omega}{\bar{\omega}%
}\sin\bar{\omega}t\\
e^{-i\omega_{0}t}\times\frac{i\omega}{\bar{\omega}}\sin\bar{\omega}t &
e^{-i\omega_{0}t}\left(  \cos\bar{\omega}t+\frac{i\omega_{0}}{\bar{\omega}%
}\sin\bar{\omega}t\right)
\end{array}
\right]  ,
\end{equation}
where $\bar{\omega}=\sqrt{\omega^{2}+\omega_{0}^{2}}$.

The condition (\ref{b1}) for the second kind of QAA can be written as%
\begin{equation}
\left(  \cos\bar{\omega}t\right)  ^{2}+\left(  \frac{\omega_{0}}{\bar{\omega}%
}\sin\bar{\omega}t\right)  ^{2}\approx1, \label{dr1}%
\end{equation}
which can be simplified into%
\begin{equation}
\left(  \left(  \frac{\omega_{0}}{\bar{\omega}}\right)  ^{2}-1\right)
\sin^{2}\bar{\omega}t\approx0. \label{dr2}%
\end{equation}
From (\ref{dr2}) it can be seen that when $\omega\ll\omega_{0}$ the second
kind of QAA will be acceptable no matter how lage $t$ is. This result can also
be obtained from Eq. (\ref{xx}).

The condition (\ref{a3}) for the first kind of QAA can be written as%
\begin{equation}
e^{i\omega_{0}t}\left(  \cos\bar{\omega}t-\frac{i\omega_{0}}{\bar{\omega}}%
\sin\bar{\omega}t\right)  \approx1, \label{dr3}%
\end{equation}
which is the same as
\begin{equation}
e^{-i\frac{\omega^{2}t}{\bar{\omega}+\omega_{0}}}+ie^{i\omega_{0}t}\left(
1-\frac{\omega_{0}}{\bar{\omega}}\right)  \sin\bar{\omega}t\approx1.
\label{dr4}%
\end{equation}
Notice that (\ref{dr3}) can lead to (\ref{dr1}), it is not hard to see that
(\ref{dr4}) is equivalent to (\ref{dr2}) plus $\omega^{2}t/\left(  \bar
{\omega}+\omega_{0}\right)  \ll1$. When $\omega\ll\omega_{0}$ and $\omega
^{2}t/\omega_{0}\ll1$ the first kind of QAA will be acceptable. Though
condition $\omega\ll\omega_{0}$ is enough for the second kind of QAA to be
acceptable, it cannot certainly\ make the first kind of QAA acceptable when
$t$ is large and this result has been implied in Ref. \cite{pla}.

In this example, the upper bound expression Eq. (\ref{upp}) can be written as%
\begin{equation}
\left\vert c_{1}\left(  t\right)  -c_{1}\left(  0\right)  \right\vert
\leq\frac{\omega}{\omega_{0}}+\frac{\omega^{2}t}{2\omega_{0}}. \label{dr6}%
\end{equation}
From (\ref{dr6}) we can also conclude that when $\omega\ll\omega_{0}$ and
$\omega^{2}t/\omega_{0}\ll1$ the first kind of QAA will be acceptable, which
coincides with the result obtained by figuring out $U_{a}^{\dag}\left(
t\right)  U_{d}\left(  t\right)  $.

\section{Quantum adiabatic approximation for Hamiltonian $H\left(  t/T\right)
$\label{sec:5}}

The above discussion on QAA applies to a general time-dependent $H_{t}$. When
$H_{t}$ has the special form $H_{t}=H\left(  s\right)  $ where $s=t/T\in
\left[  0,1\right]  $, we can discuss the relation between the parameter $T$
and the error caused by QAA.

\begin{theorem}
\label{th2}Suppose $H_{t}$ has the special form $H_{t}=H\left(  s\right)  $
where $s=t/T\in\left[  0,1\right]  $. When
\begin{equation}
T\geq\frac{\sqrt{2}\hbar}{\delta}\max_{s\in\left[  0,1\right]  }\left(
\frac{2\left\Vert \frac{dH\left(  s\right)  }{ds}\right\Vert }{\Delta\left(
s\right)  ^{2}}+\frac{7\left\Vert \frac{dH\left(  s\right)  }{ds}\right\Vert
^{2}}{\Delta\left(  s\right)  ^{3}}+\frac{\left\Vert \frac{d^{2}H\left(
s\right)  }{ds^{2}}\right\Vert }{\Delta\left(  s\right)  ^{2}}\right)
\label{add04}%
\end{equation}
where $\Delta\left(  s\right)  =\left\vert 1\left(  s\right)  -2\left(
s\right)  \right\vert $, the error caused by QAA cannot be bigger than
$\delta$, i.e.,
\begin{equation}
\left\Vert U_{d}\left(  t\right)  \left\vert \Psi\left(  0\right)
\right\rangle -U_{a}\left(  t\right)  \left\vert \Psi\left(  0\right)
\right\rangle \right\Vert \leq\delta.
\end{equation}

\end{theorem}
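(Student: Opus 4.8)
The plan is to re-run the integration-by-parts estimate from the proof of the QAT, but this time track all constants explicitly rather than merely sending $T\to\infty$. Recall from Eq.~(\ref{mm}) that
\[
\left\Vert U_{d}\left(  t\right)  \left\vert \Psi\left(  0\right)  \right\rangle -U_{a}\left(  t\right)  \left\vert \Psi\left(  0\right)  \right\rangle \right\Vert =\sqrt{\sum_{n=1}^{2}\left\vert c_{n}\left(  t\right)  -c_{n}\left(  0\right)  \right\vert ^{2}},
\]
so it suffices to bound $\left\vert c_{1}(t)-c_{1}(0)\right\vert$ and $\left\vert c_{2}(t)-c_{2}(0)\right\vert$ each by $\delta/\sqrt{2}$. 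Using the change of variable $s=t/T$ and Eqs.~(\ref{a}--\ref{c}), I would write
\[
\left\vert c_{1}(t)-c_{1}(0)\right\vert \leq \frac{1}{T}\left(2\max_{s}\left\vert f(s)\right\vert + \max_{s}\left\vert \tfrac{d}{ds}f(s)\right\vert + \max_{s}\left\vert \left\langle 2(s)\right\vert \tfrac{d}{ds}\left\vert 1(s)\right\rangle f(s)\right\vert\right),
\]
and the same bound holds for $c_{2}$ by the symmetric computation leading to Eq.~(\ref{mm2}). Hence it is enough to show the bracketed quantity is at most $\delta T/\sqrt{2}$, i.e. that $T$ chosen as in Eq.~(\ref{add04}) dominates $\sqrt{2}/\delta$ times that bracket.

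The core of the argument is therefore to bound each of the three terms $\max_{s}\left\vert f(s)\right\vert$, $\max_{s}\left\vert f'(s)\right\vert$, and $\max_{s}\left\vert \langle 2(s)|\tfrac{d}{ds}|1(s)\rangle f(s)\right\vert$ in terms of $\left\Vert dH/ds\right\Vert$, $\left\Vert d^{2}H/ds^{2}\right\Vert$, and $\Delta(s)=\left\vert 1(s)-2(s)\right\vert$. Since $f(s)=\hbar\langle 1(s)|\tfrac{d}{ds}|2(s)\rangle/\bigl(i(1(s)-2(s))\bigr)$, the first term needs the standard estimate $\left\vert \langle 1(s)|\tfrac{d}{ds}|2(s)\rangle\right\vert \leq \left\Vert dH/ds\right\Vert/\Delta(s)$, obtained by differentiating $H(s)|2(s)\rangle = 2(s)|2(s)\rangle$, taking the inner product with $\langle 1(s)|$, and using orthogonality; this gives $\left\vert f(s)\right\vert \leq \hbar\left\Vert dH/ds\right\Vert/\Delta(s)^{2}$. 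For the off-diagonal matrix element $\langle 2(s)|\tfrac{d}{ds}|1(s)\rangle$ one gets the same bound with $1\leftrightarrow 2$, so the third term is $\leq \hbar\left\Vert dH/ds\right\Vert^{2}/\Delta(s)^{4}$. The term $\left\vert f'(s)\right\vert$ is the messy one: differentiating $f$ produces $\tfrac{d}{ds}\langle 1(s)|\tfrac{d}{ds}|2(s)\rangle$ over $\Delta$, plus $\langle 1(s)|\tfrac{d}{ds}|2(s)\rangle$ times $\tfrac{d}{ds}(1/\Delta)$; the second piece needs $\left\vert \tfrac{d}{ds}(1(s)-2(s))\right\vert \leq 2\left\Vert dH/ds\right\Vert$ (from Hellmann–Feynman, $\tfrac{d}{ds}m(s)=\langle m(s)|\tfrac{dH}{ds}|m(s)\rangle$), and the first piece, after differentiating the eigenvalue equation twice and projecting, brings in $\left\Vert d^{2}H/ds^{2}\right\Vert$ together with another factor of $\left\Vert dH/ds\right\Vert^{2}/\Delta$. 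Collecting everything and bounding $\hbar\Delta(s)^{-k}$-type factors uniformly over $s\in[0,1]$, the three terms sum to at most $\hbar$ times the max over $s$ of an expression of the form $a\left\Vert dH/ds\right\Vert/\Delta^{2}+b\left\Vert dH/ds\right\Vert^{2}/\Delta^{3}+c\left\Vert d^{2}H/ds^{2}\right\Vert/\Delta^{2}$; the arithmetic should produce the specific constants $a=2$, $b=7$, $c=1$ appearing in Eq.~(\ref{add04}), after which dividing by $T$, multiplying by $\sqrt{2}$, and invoking the hypothesis on $T$ finishes the proof.

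The main obstacle is purely the bookkeeping in bounding $\left\vert f'(s)\right\vert$: one must express $\tfrac{d^{2}}{ds^{2}}|2(s)\rangle$ (or rather its relevant projections) through $H(s)$, $dH/ds$, $d^{2}H/ds^{2}$ and the spectral gap, keeping the phase convention $\langle m(s)|\tfrac{d}{ds}|m(s)\rangle=0$ in force, and then chase the numerical coefficients so that they land exactly on $2$, $7$, $1$ rather than some larger constants. No new idea beyond the QAT proof is required; the work is in making the earlier inequalities (\ref{a}--\ref{c}) quantitative and uniform. I would also remark that the bound is manifestly independent of the initial state $\left\vert \Psi(0)\right\vert$ and of $t\in[0,T]$, since every estimate was taken as a maximum over the whole parameter interval $s\in[0,1]$, which is what licenses the clean statement.
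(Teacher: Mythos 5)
Your proposal follows essentially the same route as the paper's proof: it reuses the integration-by-parts decomposition of $c_{1}(t'')-c_{1}(0)$ from the QAT argument, converts Eqs.~(\ref{a})--(\ref{c}) into the explicit bound $\vert c_{n}(t)-c_{n}(0)\vert\le\frac{1}{T}\max_{s}[\,2\vert f(s)\vert+\vert f'(s)\vert+\vert\langle 2(s)\vert\frac{d}{ds}\vert 1(s)\rangle f(s)\vert\,]$, and then estimates each term through $\vert\langle 1(s)\vert\frac{d}{ds}\vert 2(s)\rangle\vert\le\Vert dH/ds\Vert/\Delta(s)$, exactly as in Eqs.~(\ref{qq1})--(\ref{qq3}), before combining the two components via Eq.~(\ref{mm}) to pick up the factor $\sqrt{2}$. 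One arithmetic slip: the cross term is the product of $\vert\langle 2(s)\vert\frac{d}{ds}\vert 1(s)\rangle\vert\le\Vert dH/ds\Vert/\Delta(s)$ with $\vert f(s)\vert\le\hbar\Vert dH/ds\Vert/\Delta(s)^{2}$, hence is bounded by $\hbar\Vert dH/ds\Vert^{2}/\Delta(s)^{3}$ rather than $\Delta(s)^{4}$; with that correction your three estimates land on the coefficients $2$, $7$, $1$ of Eq.~(\ref{add04}) just as in the paper.
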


\begin{proof}
In Sec. \ref{sec:2}, we write $U_{d}\left(  t\right)  \left\vert \Psi\left(
0\right)  \right\rangle $ in Eq. (\ref{nn}) and we know from Eqs.
(\ref{ABC},\ref{a}-\ref{c}) that
\begin{align}
&  \left\vert c_{n}\left(  t\right)  -c_{n}\left(  0\right)  \right\vert
\label{x2}\\
&  \leq\frac{1}{T}\max_{s\in\left[  0,1\right]  }\left[  2\left\vert f\left(
s\right)  \right\vert +\left\vert \frac{df\left(  s\right)  }{ds}\right\vert
+\left\vert \left\langle 2\left(  s\right)  \right\vert \frac{d}{ds}\left\vert
1\left(  s\right)  \right\rangle f\left(  s\right)  \right\vert \right]
,\nonumber
\end{align}
where
\begin{equation}
f\left(  s\right)  =\frac{\hbar\left\langle 1\left(  s\right)  \right\vert
\frac{d}{ds}\left\vert 2\left(  s\right)  \right\rangle }{i\left(  1\left(
s\right)  -2\left(  s\right)  \right)  }%
\end{equation}
is defined in Eq. (\ref{f}). Now we give an upper bound of $\left\vert
c_{n}\left(  t\right)  -c_{n}\left(  0\right)  \right\vert $ expressed in norm
of the Hamiltonian $H\left(  s\right)  $ and its derivatives. First we have
\cite{messiah,schiff}
\begin{equation}
\left\langle 1\left(  s\right)  \right\vert \frac{d}{ds}\left\vert 2\left(
s\right)  \right\rangle =\frac{\left\langle 1\left(  s\right)  \right\vert
\frac{dH\left(  s\right)  }{ds}\left\vert 2\left(  s\right)  \right\rangle
}{2\left(  s\right)  -1\left(  s\right)  },
\end{equation}
so there are
\begin{equation}
\left\vert f\left(  s\right)  \right\vert =\left\vert \frac{\hbar\left\langle
1\left(  s\right)  \right\vert \frac{dH\left(  s\right)  }{ds}\left\vert
2\left(  s\right)  \right\rangle }{-i\left(  1\left(  s\right)  -2\left(
s\right)  \right)  ^{2}}\right\vert \leq\frac{\hbar\left\Vert \frac{dH\left(
s\right)  }{ds}\right\Vert }{\Delta\left(  s\right)  ^{2}}, \label{qq1}%
\end{equation}
and%
\begin{equation}
\left\vert \left\langle 2\left(  s\right)  \right\vert \frac{d}{ds}\left\vert
1\left(  s\right)  \right\rangle f\left(  s\right)  \right\vert \leq
\frac{\hbar\left\Vert \frac{dH\left(  s\right)  }{ds}\right\Vert ^{2}}%
{\Delta\left(  s\right)  ^{3}}. \label{qq2}%
\end{equation}
It can be proved that%
\begin{align}
\left\vert \frac{d}{ds}f\left(  s\right)  \right\vert  &  =\left\vert \frac
{d}{ds}\left[  \frac{\hbar\left\langle 1\left(  s\right)  \right\vert
\frac{dH\left(  s\right)  }{ds}\left\vert 2\left(  s\right)  \right\rangle
}{-i\left(  1\left(  s\right)  -2\left(  s\right)  \right)  ^{2}}\right]
\right\vert \label{qq3}\\
&  \leq\frac{\hbar\left\Vert \frac{d^{2}H\left(  s\right)  }{ds^{2}%
}\right\Vert }{\Delta\left(  s\right)  ^{2}}+6\frac{\hbar\left\Vert
\frac{dH\left(  s\right)  }{ds}\right\Vert ^{2}}{\Delta\left(  s\right)  ^{3}%
}.\nonumber
\end{align}
Substitute Eqs. (\ref{qq1}-\ref{qq3}) into Eq. (\ref{x2}), we get%
\begin{align}
&  \left\vert c_{n}\left(  t\right)  -c_{n}\left(  0\right)  \right\vert \\
&  \leq\frac{\hbar}{T}\max_{s\in\left[  0,1\right]  }\left[  \frac{2\left\Vert
\frac{dH\left(  s\right)  }{ds}\right\Vert }{\Delta\left(  s\right)  ^{2}%
}+\frac{\left\Vert \frac{d^{2}H\left(  s\right)  }{ds^{2}}\right\Vert }%
{\Delta\left(  s\right)  ^{2}}+\frac{7\left\Vert \frac{dH\left(  s\right)
}{ds}\right\Vert ^{2}}{\Delta\left(  s\right)  ^{3}}\right]  .\nonumber
\end{align}
Because%
\begin{align}
&  \left\Vert U_{d}\left(  t\right)  \left\vert \Psi\left(  0\right)
\right\rangle -U_{a}\left(  t\right)  \left\vert \Psi\left(  0\right)
\right\rangle \right\Vert \nonumber\\
&  =\sqrt{%
{\displaystyle\sum\limits_{n=1}^{2}}
\left\vert c_{n}\left(  t\right)  -c_{n}\left(  0\right)  \right\vert ^{2}},
\end{align}
when (\ref{add04}) is satisfied we have
\begin{align}
&  \left\Vert U_{d}\left(  t\right)  \left\vert \Psi\left(  0\right)
\right\rangle -U_{a}\left(  t\right)  \left\vert \Psi\left(  0\right)
\right\rangle \right\Vert \\
&  \leq\frac{\hbar\sqrt{2}}{T}\max_{s\in\left[  0,1\right]  }\left[
\frac{2\left\Vert \frac{dH\left(  s\right)  }{ds}\right\Vert }{\Delta\left(
s\right)  ^{2}}+\frac{\left\Vert \frac{d^{2}H\left(  s\right)  }{ds^{2}%
}\right\Vert }{\Delta\left(  s\right)  ^{2}}+\frac{7\left\Vert \frac{dH\left(
s\right)  }{ds}\right\Vert ^{2}}{\Delta\left(  s\right)  ^{3}}\right]
\nonumber\\
&  \leq\delta,
\end{align}
which completes the proof.

A result similar to theorem \ref{th2} appears in Ref. \cite{regev}, which is
derived from a very different method. Under the same error $\delta$, the
required evolution time $T$ in our result is proportional to $1/\delta$ while
in Ref. \cite{regev} it is proportional to $1/\delta^{2}$. We think this is
one of the most differences between the two results.
\end{proof}

\section{Conclusion\label{sec:6}}

We give a discussion on the conditions of QAA. We present a proof of QAT,
which we think is easier than that appears in the textbook \cite{messiah}. We
think there are two kinds of QAA, one cares the relative phase in the
approximate system state while the other does not. For the kind of Hamiltonian
$H\left(  t/T\right)  $ we give a relation between the size of the error
caused by QAA and the parameter $T$.

\begin{acknowledgments}
This work was funded by the National Fundamental Research Program, the
National Natural Science Foundation of China (Grant Nos. 10674127, 60121503),
the Innovation funds from Chinese Academy of Science and program for NCET.
\end{acknowledgments}

\end{document}